\documentclass[a4paper,11pt,fleqn]{article}

\usepackage{amsmath,amssymb,amsthm,enumerate,cite,graphicx} 

\setlength{\textwidth}{160.0mm}
\setlength{\textheight}{240.0mm}
\setlength{\oddsidemargin}{0mm}
\setlength{\evensidemargin}{0mm}
\setlength{\topmargin}{-15mm} 
\setlength{\parindent}{5.0mm}

\flushbottom

\newcommand{\noprint}[1]{}
\newcommand{\const}{\mathop{\rm const}\nolimits}

\newcommand{\thetbn}{\arabic{nomer}}

\newcommand{\todo}[1][\null]{\ensuremath{\clubsuit}}

\newcounter{tbn}

\newcounter{mcasenum}

\newtheorem{theorem}{Theorem}

\newtheorem{corollary}{Corollary}

\newtheorem*{proposition*}{Proposition}
{\theoremstyle{definition}

}

\begin{document}
\begin{center}
\par\noindent {\LARGE\bf
Numerical solutions of boundary value problems\\ for variable coefficient generalized KdV  equations\\  using Lie symmetries
\par}

{\vspace{4mm}\par\noindent\large  O.\,O.~Vaneeva$^{\dag 1}$, N.C. Papanicolaou$^{\ddag 2}$, M.A. Christou$^{\ddag 3}$ and C. Sophocleous$^{\S 4}$
\par\par}
\end{center}
{\par\noindent\it
${}^\dag$\ Institute of Mathematics of the National Academy of Sciences of Ukraine,\\[0.75ex]
\phantom{${}^\dag$} 3 Tereshchenkivska Str., Kyiv-4, 01601, Ukraine\\[1ex]
${}^\ddag$\ Department of Mathematics, University of Nicosia, P.O. Box 24005, 1700 Nicosia, Cyprus\\[1ex]
${}^\S$\ Department of Mathematics and Statistics, University of Cyprus, Nicosia CY 1678, Cyprus
}
\begin{center}
{\par\noindent\small
$^1$vaneeva@imath.kiev.ua, $^2$papanicolaou.n@unic.ac.cy, $^3$christou.ma@unic.ac.cy,
$^4$christod@ucy.ac.cy
\par}
\end{center}

{\vspace{3mm}\par\noindent\hspace*{8mm}\parbox{146mm}{\small The exhaustive group classification of a class of variable coefficient generalized KdV equations is presented, which completes and enhances results existing in the literature. Lie symmetries are used for solving an initial and boundary value problem for certain subclasses of the above class. Namely, the found Lie symmetries are applied in order to reduce the initial and boundary value problem for the generalized KdV equations (which are PDEs) to an initial value problem for  nonlinear third-order ODEs. The latter problem is solved numerically using the finite difference method. Numerical solutions are computed and the vast parameter space is studied.
}\par\vspace{2mm}}
\section{Introduction}

Nonlinear partial differential equations (PDEs) are often used in the physical sciences and enginee\-ring to  model real-world phenomena. Unfortunately, there is still no general theory of solving such equations. One of the most efficient techniques for construction of solutions is the Lie reduction method, based on the usage of Lie symmetries that correspond to Lie groups of continuous point transformations~\cite{Olver1986,Ovsiannikov1982}.
The Lie reduction method is algorithmic and powerful,
in particular, any ($1{+}1$)-dimensional PDE admitting a one-parameter Lie symmetry group (acting regularly and transversally on a manifold defined by the PDE~\cite{Olver1986}) can always be reduced to an ordinary differential equation (ODE).

 In applications,  one is usually interested   in a solution of a given PDE satisfying some
initial condition  or/and a boundary condition.
In a recent paper~\cite{VSL}, Lie symmetries were successfully  applied to solve an initial  and boundary value problem (IBVP) for a generalized Burgers equation arising in nonlinear acoustics. Namely, the IBVP for a generalized Burgers equation was reduced to an initial value problem (IVP) for a related nonlinear second-order ODE. As a result, a closed-form solution of the IBVP for the generalized Burgers equation was found.
Motivated by that work, we intend to apply  Lie symmetries  to construct solutions for IBVP for the variable coefficient generalized  Korteweg--de Vries (KdV) equation,
   \begin{equation}\label{EqSenthilkumaran&Pandiaraja&Vaganan2008}
      u_t+u^nu_x+h(t)u+g(t)u_{xxx}=0,\qquad ng\neq0,
   \end{equation}
arising in several applications (see~\cite{Senthilkumaran&Pandiaraja&Vaganan2008} and references therein). For this purpose we carry out an exhaustive group classification of equations from this class. In other words, we at first find a Lie invariance algebra admitted by any equation in the class, the so-called kernel algebra, and then classify all possible cases of extension of Lie invariance algebras of such equations with respect to
the equivalence group of the class~\cite{Ovsiannikov1982}.
 Some cases of Lie symmetry extension for~\eqref{EqSenthilkumaran&Pandiaraja&Vaganan2008} were found in~\cite{Senthilkumaran&Pandiaraja&Vaganan2008}, namely,  the cases $h={\rm const}$ and $h=1/(at+b)$ with $a$ and $b$ being constants.
 Here we present a complete group classification taking advantage of the use of equivalence transformations (this opportunity was neglected in~\cite{Senthilkumaran&Pandiaraja&Vaganan2008}). We point out that complete group classifications of class~\eqref{EqSenthilkumaran&Pandiaraja&Vaganan2008} for $n=1$ and $n=2$ were obtained in~\cite{Popovych&Vaneeva2010,Vaneeva2012} (see also~\cite{Vaneeva2013}). The results were presented there in two ways: with respect to corresponding equivalence groups and
 using no equivalence. We would like to mention that in~\cite{Gungor&Lahno&Zhdanov2004,Magadeev1993} group classifications for more general classes that include
 class~\eqref{EqSenthilkumaran&Pandiaraja&Vaganan2008} were carried out. However it seems to be inconvenient
  to derive group classifications for class~\eqref{EqSenthilkumaran&Pandiaraja&Vaganan2008} using those results obtained up to a very wide equivalence group.

 Note that the more general class of the form
\begin{equation}\label{EqSenthilkumaran&Pandiaraja&Vaganan2008fgh}
u_t+f(t)u^nu_x+h(t)u+g(t)u_{xxx}=0,\quad n, f, g \neq0,
\end{equation}
reduces to class~\eqref{EqSenthilkumaran&Pandiaraja&Vaganan2008} via the change of the variable $t$.
That is why without loss of generality it is sufficient to study  class~\eqref{EqSenthilkumaran&Pandiaraja&Vaganan2008},
since all results on exact solutions, symmetries, conservation laws, etc.
for class~\eqref{EqSenthilkumaran&Pandiaraja&Vaganan2008fgh} can be derived form those
obtained for~\eqref{EqSenthilkumaran&Pandiaraja&Vaganan2008}.

 After group classification is carried out, the derived Lie symmetries are employed  to construct group-invariant (similarity) solutions for equations \eqref{EqSenthilkumaran&Pandiaraja&Vaganan2008}. This is achieved by constructing corresponding transformations (such a transformation is usually called an ansatz) that reduce equations of the form ~\eqref{EqSenthilkumaran&Pandiaraja&Vaganan2008} which admit Lie symmetries into ODEs. The complete list of such transformations is achieved by constructing the optimal system of subalgebras of the maximal Lie invariance algebras~\cite{Olver1986}.
 The same transformations are used to reduce an IBVP for a generalized KdV equation into an IVP for a corresponding nonlinear ODE.
 Here we require that both equation and additional conditions are invariant under the Lie group of infinitesimal transformations. This is the ``classical'' technique
 described in~\cite{Bluman&Cole1969,Bluman&Anco2002,Bluman1974}. A discussion on some other approaches can be found in~\cite{AbdelMalek1998,VSL}.
 The reduced IVP for ODE is solved numerically using a finite difference algorithm with fixed-point iterations. We present solutions obtained using specified values of the governing parameters. Then the solution of the original IBVP is recovered with the aid of the transformations in question. Therefore, we show that Lie symmetry methods are useful even in those cases when reduced ODEs cannot be solved analytically but only numerically.

\section{Equivalence transformations}
In order to solve a group classification problem it is important to derive the
transformations that preserve the differential structure of
the class of differential equations under consideration~\cite{Ovsiannikov1982}. Such transformations form a group and are called equivalence transformations.
The knowledge of transformations from the  equivalence group often gives an opportunity to essentially simplify a
group classification problem and to present final results in a concise form.
Moreover, sometimes this appears to be a crucial point in
the exhaustive solution of such problems (see, e.g.,~\cite{vane2012b,Vaneeva2012} and references therein).
We find equivalence transformations in class~\eqref{EqSenthilkumaran&Pandiaraja&Vaganan2008} using the direct method~\cite{Kingston&Sophocleous1998}.
\begin{theorem} The equivalence group~$G^{\sim}$ of class~\eqref{EqSenthilkumaran&Pandiaraja&Vaganan2008} consists of the transformations
\begin{gather}\label{theorem1}
\begin{array}{l}
\tilde t=\varepsilon_1 \int \theta^{-n} dt+\varepsilon_2,\quad \tilde x=\varepsilon_1x+\varepsilon_0,\quad
\tilde u=\theta u, \\[1ex]
\tilde g=\varepsilon_1^{\,2}\,\theta^n\, g,\quad \tilde h=\dfrac{\theta^n}{\varepsilon_1}\left(h-\dfrac{\theta_t}{\theta}\right),\quad \tilde n=n,
\end{array}
\end{gather}
where  $\varepsilon_j,$ $j=0,1,2,$ are arbitrary constants with
$\varepsilon_1\not=0$; $\theta=\theta(t)$ is an arbitrary nonvanishing smooth function.
\end{theorem}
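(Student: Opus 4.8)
The plan is to apply the direct method of~\cite{Kingston&Sophocleous1998}. I posit a general invertible point transformation
\[
\tilde t = T(t,x,u),\quad \tilde x = X(t,x,u),\quad \tilde u = U(t,x,u),
\]
not involving the arbitrary elements, and require that it take every equation~\eqref{EqSenthilkumaran&Pandiaraja&Vaganan2008} with data $(n,g,h)$ into an equation of the same form with new data $(\tilde n,\tilde g,\tilde h)$. Writing $\tilde u_{\tilde t}$, $\tilde u_{\tilde x}$ and $\tilde u_{\tilde x\tilde x\tilde x}$ through the total derivatives $D_t$, $D_x$ of $T$, $X$, $U$ and substituting them into the transformed equation, I obtain a polynomial identity in the jet variables $u_x,u_{xx},u_{xxx},\dots$ that must hold once the original equation is used to eliminate $u_t$.

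The crucial and most delicate step is to restrict the form of $T$, $X$, $U$ by requiring that the dispersive term survive as a pure $\tilde u_{\tilde x\tilde x\tilde x}$ contribution. Tracking the highest-order terms produced by the thrice-iterated chain rule, I would argue successively: the coefficients of mixed derivatives such as $u_{xxt}$ force $T_x=T_u=0$, so $T=T(t)$; the appearance of $u_x$ in the denominator $D_xX=X_x+u_xX_u$ forces $X_u=0$, so $X=X(t,x)$; the spurious $u_{xx}$ term generated by the third $\tilde x$-derivative forces $X_{xx}=0$, so $X=\alpha(t)x+\beta(t)$; and, finally, preservation of the nonlinear term $u^nu_x$ alongside the linear term $hu$ forces $U$ to be proportional to $u$ with no $x$-dependence, $U=\theta(t)u$ (any additive part or factor $\theta_x\neq0$ would create $u^{k}u_x$ or $u^{n+1}$ monomials absent from the class). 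Establishing these reductions rigorously is the main obstacle; everything afterwards is bookkeeping.

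With $T=T(t)$, $X=\alpha(t)x+\beta(t)$ and $U=\theta(t)u$, the chain rule gives $\tilde u_{\tilde x}=(\theta/\alpha)u_x$ and $\tilde u_{\tilde x\tilde x\tilde x}=(\theta/\alpha^{3})u_{xxx}$, while $\tilde u_{\tilde t}$ carries a spurious $u_x$ term with coefficient proportional to $\alpha_tx+\beta_t$. Substituting into the target equation and matching, term by term, the coefficients of $u_t$, $u^nu_x$, $u_{xxx}$, $u_x$ and $u$ against those of~\eqref{EqSenthilkumaran&Pandiaraja&Vaganan2008} multiplied by $\theta/T_t$, I read off the result: vanishing of the $u_x$ term gives $\alpha_t=\beta_t=0$, hence $\alpha=\varepsilon_1$ and $\beta=\varepsilon_0$; the $u^nu_x$ term gives $T_t=\varepsilon_1\theta^{-n}$, so $\tilde t=\varepsilon_1\int\theta^{-n}\,dt+\varepsilon_2$ and $\tilde n=n$; the $u_{xxx}$ coefficient gives $\tilde g=\varepsilon_1^{\,2}\theta^ng$; and the $u$ coefficient gives $\tilde h=(\theta^n/\varepsilon_1)(h-\theta_t/\theta)$. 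These are precisely the transformations~\eqref{theorem1}. I would close by checking the converse---that every such transformation indeed maps the class into itself---and that the family is closed under composition and inversion, so that~\eqref{theorem1} constitutes the full equivalence group $G^{\sim}$.
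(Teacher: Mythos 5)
Your proposal is correct and follows essentially the same route as the paper: the direct method, with the preliminary restrictions $T_x=T_u=X_u=X_{xx}=0$ and the linearity of $U$ in $u$ established first, followed by coefficient matching on the manifold of the equation. The only substantive difference is that the paper obtains those restrictions by citing the general result of~\cite{VPS2013} on admissible transformations of evolution equations, whereas you propose to re-derive them by tracking highest-order jet terms, which is routine but would need to be carried out in full. One caveat: your claim that preservation of $u^nu_x$ and $hu$ forces $U=\theta(t)u$ with no additive part is false precisely when $n=1$ (where $U^0\neq0$ is admissible and the class has a wider equivalence group); the conclusion survives for $G^\sim$ only because the equivalence group must act uniformly on the whole class, including $n\neq1$, and the paper makes this exclusion of $n=1$ explicit.
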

\begin{proof}
Suppose
that an equation of the form~\eqref{EqSenthilkumaran&Pandiaraja&Vaganan2008} is connected with an equation
\begin{gather}
\label{GBE_tilde}
{\tilde u}_{\tilde t}+\tilde u^{\tilde n}{\tilde u}_{\tilde x}+\tilde h(\tilde t)\tilde u+
\tilde g(\tilde t){\tilde u}_{\tilde x\tilde x\tilde x}=0
\end{gather}
from the same class by a~point transformation
$\tilde t=T(t,x,u)$,
$\tilde x=X(t,x,u)$,
$\tilde u=U(t,x,u)$,
where $|\partial(T,X,U)/\partial(t,x,u)|\ne0$.
It is proved in~\cite{VPS2013} that admissible transformations of evolution equations of the form $u_t=F(t)u_n+G(t,x,u,u_1,\dots,u_{n-1}),$ $F\ne0,$ $ G_{u_iu_{n-1}}=0,$ $i=1,\dots,n-1,$
where $n\geqslant 2$, obey the restrictions $T_x=T_u=X_u=X_{xx}=U_{uu}=0$.
Therefore, for the class~\eqref{EqSenthilkumaran&Pandiaraja&Vaganan2008} it is sufficient to consider transformations of the form
\begin{gather*}
\tilde t=T(t),\quad \tilde x=X^1(t)x+X^0(t),\quad \tilde u=U^1(t,x)u+U^0(t,x),
\end{gather*}
where $T_tX^1U^1\neq0$.
Expressing $\tilde u$ and its derivatives ${\tilde u}_{\tilde t}$, ${\tilde u}_{\tilde x}$, and ${\tilde u}_{\tilde x\tilde x\tilde x}$ in~\eqref{GBE_tilde} in terms of the untilded variables, we get the equation
\begin{gather*}
X^1(U^1_tu+U^1u_t+U^0_t)-(X^1_tx+X^0_t)(U^1_xu+U^1u_x+U^0_x)+T_t(U^1u+U^0)^{\tilde n}(U^1_xu+U^1u_x\\+U^0_x)+\tilde h T_tX^1(U^1u+U^0)+\tilde g \frac{T_t}{(X^1)^2}(U^1_{xxx}u+3U^1_{xx}u_x+3U^1_{x}u_{xx}+U^1u_{xxx}+U^0_{xxx})=0.
\end{gather*}
We then substitute $u_t=-u^nu_x-h(t)u-g(t)u_{xxx}$ into the latter equation
in order to confine it to the manifold defined by~\eqref{EqSenthilkumaran&Pandiaraja&Vaganan2008} in the third-order jet space
with the independent variables $(t,x)$ and the dependent variable~$u$. Splitting the obtained identity with respect to the derivatives $u_x$, $u_{xx}$ and $u_{xxx}$, we get the following conditions $\tilde n=n$, $U^1_x=0$, $\tilde g=(X^1)^3T_t^{-1}g$, $T_t=X^1(U^1)^{-n}$; $U^0=0$ for all $n$ except $n=1$. As the group classification for class~\eqref{EqSenthilkumaran&Pandiaraja&Vaganan2008} with $n=1$ is known, we assume that $n\neq1$.\footnote{Note that if $n=1$, then class~\eqref{EqSenthilkumaran&Pandiaraja&Vaganan2008} admits a wider equivalence group (see~\cite{Vaneeva2013}).}
Then, the rest of the determining equations are
$X^1_t=X^0_t=0$, and $\tilde hT_tU^1+U^1_t-hU^1=0.$ Solving these equations we obtain $X^i=\varepsilon_i$, $i=0,1$, where $\varepsilon_1$ and $\varepsilon_0$  are arbitrary constants with $\varepsilon_1\neq0;$ $U^1=\theta(t)$ is an arbitrary nonvanishing smooth function of~$t$. Then, $T=\varepsilon_1 \int \theta^{-n} dt+\varepsilon_2$,
where $\varepsilon_2$ is an arbitrary constant. Here and in what follows an integral with respect to~$t$ should be interpreted as a fixed antiderivative. The transformation components for $g$ and $h$ take the form presented in~\eqref{theorem1}. Theorem~1 is  proved.
\end{proof}

Now we can use equivalence transformations~\eqref{theorem1} to gauge one of the arbitrary elements $g$ or $h$ to a simple constant value.
It was shown in~\cite{Popovych&Vaneeva2010} that the parameter-function~$h$ in~\eqref{EqSenthilkumaran&Pandiaraja&Vaganan2008} can be set equal to zero by the point transformation
\begin{equation}\label{EqSenthilkumaran&Pandiaraja&Vaganan2008Trans}
\tilde t=\int e^{-n\!\int h(t)\,dt}\,dt,\quad
\tilde x=x,\quad
\tilde u=e^{\,\int\! h(t)\,dt} u,
\end{equation}
and the transformed value of the arbitrary element~$g$ is $\tilde g(\tilde t)=e^{\,n\!\int h(t)\,dt}g(t)$. This transformation can be easily found from Theorem~2 setting $\tilde h=0$ in~\eqref{theorem1} and solving the obtained equation for~$\theta$.
The fact that the arbitrary element $h$ can always be set to zero means that fixing the arbitrary element~$h$ cannot lead to cases of equations~\eqref{EqSenthilkumaran&Pandiaraja&Vaganan2008}
with special symmetry properties. So, without loss of generality we can restrict our investigation to the class
\begin{equation}\label{EqSenthilkumaran&Pandiaraja&Vaganan2008_h=0}
u_t+u^nu_x+g(t)u_{xxx}=0, \quad n, g \neq0.
\end{equation}
All results on Lie symmetries and exact solutions for equations~\eqref{EqSenthilkumaran&Pandiaraja&Vaganan2008} can be derived from similar results obtained for equations~\eqref{EqSenthilkumaran&Pandiaraja&Vaganan2008_h=0}.

There are no other point transformations except~\eqref{theorem1} that link equations from class~\eqref{EqSenthilkumaran&Pandiaraja&Vaganan2008} with $n\neq0,1$, therefore, this class is normalized (see related notions in~\cite{popo2010a,VPS2013}) with respect to its equivalence group $G^{\sim}$. It means that an equivalence group for~\eqref{EqSenthilkumaran&Pandiaraja&Vaganan2008_h=0} can be derived from Theorem~1 by simply setting $\tilde h=h=0$ (for classes which are not normalized this could lead to an incomplete result). Then we get that $\theta=\varepsilon_3$ is an arbitrary nonzero constant and the following statement is true.

\begin{corollary} The equivalence group~$G^{\sim}_0$ of class~\eqref{EqSenthilkumaran&Pandiaraja&Vaganan2008_h=0} is formed by the transformations
\[
\begin{array}{l}
\tilde t={\varepsilon_1}{\varepsilon_3^{\,-n}} t+\varepsilon_0,\quad \tilde x=\varepsilon_1x+\varepsilon_2,\quad
\tilde u=\varepsilon_3u, \\[1ex]
\tilde g=\varepsilon_1^{\,2}\varepsilon_3^{\,n}g,\quad \tilde n=n,
\end{array}
\]
where  $\varepsilon_j,$ $j=0,\dots,3,$ are arbitrary constants with
$\varepsilon_1\varepsilon_3\not=0$.
\end{corollary}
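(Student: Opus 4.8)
The plan is to derive $G^{\sim}_0$ as the restriction of the equivalence group $G^{\sim}$ of the larger class~\eqref{EqSenthilkumaran&Pandiaraja&Vaganan2008} already found in Theorem~1, rather than to repeat the direct-method computation from scratch. The justification is normalization: as noted just above, class~\eqref{EqSenthilkumaran&Pandiaraja&Vaganan2008} with $n\neq0,1$ admits no point transformations between its members other than those in~\eqref{theorem1}. Hence every equivalence transformation of the subclass~\eqref{EqSenthilkumaran&Pandiaraja&Vaganan2008_h=0} is automatically a member of~\eqref{theorem1}, and $G^{\sim}_0$ is exactly the stabilizer of the constraint $h=0$ inside $G^{\sim}$, i.e.\ those transformations~\eqref{theorem1} that send $h\equiv0$ to $\tilde h\equiv0$. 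The task therefore reduces to imposing $\tilde h=h=0$ on~\eqref{theorem1}.

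The single constraint comes from the transformation rule $\tilde h=\varepsilon_1^{-1}\theta^{n}\left(h-\theta_t/\theta\right)$. Setting $h=0$ and $\tilde h=0$ gives $\varepsilon_1^{-1}\theta^{n-1}\theta_t=0$; since $\theta$ is nonvanishing and $\varepsilon_1\neq0$, this forces $\theta_t=0$, so the free function $\theta(t)$ collapses to a nonzero constant $\varepsilon_3$. Feeding $\theta=\varepsilon_3$ into the remaining components of~\eqref{theorem1} is then routine: the integral becomes $\int\theta^{-n}\,dt=\varepsilon_3^{-n}t$ up to an additive constant absorbed into the translation parameter, so $\tilde t=\varepsilon_1\varepsilon_3^{-n}t+\varepsilon_2$, while $\tilde x=\varepsilon_1 x+\varepsilon_0$, $\tilde u=\varepsilon_3 u$ and $\tilde g=\varepsilon_1^{2}\varepsilon_3^{n}g$, with $\varepsilon_1\varepsilon_3\neq0$. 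A relabelling of the additive constants $\varepsilon_0$ and $\varepsilon_2$ yields precisely the formulas in the statement.

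The only substantive point is the appeal to normalization in the first step: it guarantees that setting $\tilde h=h=0$ in $G^{\sim}$ produces the \emph{complete} equivalence group of~\eqref{EqSenthilkumaran&Pandiaraja&Vaganan2008_h=0} and not merely a subgroup of it. For a non-normalized class the subclass $h=0$ could acquire extra point transformations not inherited from $G^{\sim}$ --- ones preserving $h=0$ but altering a generic~$h$ --- and the restriction argument would miss them; normalization, established above for $n\neq0,1$, excludes this possibility, after which the computation is purely mechanical.
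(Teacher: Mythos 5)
Your proposal is correct and follows essentially the same route as the paper: the text preceding the corollary invokes normalization of class~\eqref{EqSenthilkumaran&Pandiaraja&Vaganan2008} to justify restricting the group of Theorem~1 by $\tilde h=h=0$, which forces $\theta_t=0$ and hence $\theta=\varepsilon_3=\mathrm{const}$, exactly as you argue. Your explicit remark that normalization is what guarantees completeness of the restricted group (and would fail for a non-normalized class) matches the paper's parenthetical caveat.
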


\section{Lie symmetries}
The group classification of class~\eqref{EqSenthilkumaran&Pandiaraja&Vaganan2008_h=0} up to $G^{\sim}_0$-equivalence coincides with
the group classification of class~\eqref{EqSenthilkumaran&Pandiaraja&Vaganan2008} up to $G^{\sim}$-equivalence.
We carry out the group classification of class~\eqref{EqSenthilkumaran&Pandiaraja&Vaganan2008_h=0}
using the  classical algorithm~\cite{Olver1986}.
Namely, we
search for Lie symmetry generators of the form $\Gamma=\tau(t,x,u)\partial_t+\xi(t,x,u)\partial_x+\eta(t,x,u)\partial_u$
and require that
\begin{equation}\label{c2}
\Gamma^{(3)}\{u_t+u^nu_x+g(t)u_{xxx}\}=0
\end{equation}
identically, modulo equation~\eqref{EqSenthilkumaran&Pandiaraja&Vaganan2008_h=0}. Here  $\Gamma^{(3)}$ is the third prolongation of the operator~$\Gamma$~\cite{Olver1986,Ovsiannikov1982},
$\Gamma^{(3)}=\Gamma+\eta^t\partial_{u_t}+\eta^x\partial_{u_x}+\eta^{xxx}\partial_{u_{xxx}}$, where
\begin{gather*}
\eta^t=D_t(\eta)-u_tD_t(\tau)-u_xD_t(\xi),\quad
\eta^x=D_x(\eta)-u_tD_x(\tau)-u_xD_x(\xi),\\
\eta^{xx}=D_x(\eta^x)-u_{tx}D_x(\tau)-u_{xx}D_x(\xi),\quad
\eta^{xxx}=D_x(\eta^{xx})-u_{txx}D_x(\tau)-u_{xxx}D_x(\xi),
\end{gather*}
$D_t=\partial_t+u_t\partial_{u}+u_{tt}\partial_{u_t}+u_{tx}\partial_{u_x}+\dots{}$ and
$D_x=\partial_x+u_x\partial_{u}+u_{tx}\partial_{u_t}+u_{xx}\partial_{u_x}+\dots{}$
are operators of the total differentiation with respect to~$t$ and~$x$, respectively.
The infinitesimal invariance criterion~\eqref{c2} implies that
\[
\tau=\tau(t),\quad
\xi=\xi(t,x), \quad
\eta=\eta^1(t,x)u+\eta^0(t,x),
\]
where $\tau$, $\xi$, $\eta^1$ and $\eta^0$ are arbitrary smooth functions of their variables.
The rest of the determining equations have the form
\begin{gather}\label{deteq1a}
3g\xi_x=(g\tau)_t,\qquad \eta^1_x=\xi_{xx},\\\label{deteq2a}
\eta^1_xu^{n+1}+\eta^0_xu^{n}+(\eta^1_t+g\eta^1_{xxx})u+\eta^0_t+g\eta^0_{xxx}=0,\\\label{deteq3a}
(\tau_t-\xi_x+n\eta^1)u^n+n\eta^0u^{n-1}-g\xi_{xxx}+3g\eta^1_{xx}-\xi_t=0.
\end{gather}
It is easy to see from~\eqref{deteq1a} that $\xi_{xx}=0$ and therefore $\eta^1=\eta^1(t).$
The second and third equations can be split with respect to different powers of $u$. Special cases of splitting arise if $n=0,1$.
If $n=0$ equations~\eqref{EqSenthilkumaran&Pandiaraja&Vaganan2008_h=0} are linear and are excluded from our current investigation. We also do not consider the case $n=1$ since the exhaustive group classification for this case is carried out in~\cite{Popovych&Vaneeva2010,Vaneeva2013}.  Therefore we only investigate cases for which $n\neq0,1$.

If $n\neq0,1,$ then~\eqref{deteq1a}--\eqref{deteq3a} imply $\eta^1=c_0,$ where $c_0$ is an arbitrary constant and $\eta^0=0$. Then
\[\tau=c_1t+c_2,\quad \xi=(c_1+nc_0)x+c_3,\quad\eta=c_0u,\]
where $c_i$, $i=0,\dots,3$, are arbitrary constants.
The classifying equation for $g$ has the form
\begin{equation}\label{c11}
(c_1t+c_2)g_t=(2c_1+3nc_0)g.
\end{equation}

Further analysis is carried out using  the method of furcate split~\cite{Popovych&Ivanova2004}.
For each operator~$\Gamma$ from maximal Lie invariance algebra $A^{\max}$ equation~\eqref{c11}
gives some equation on~$g$ of the general form
$
(a\,t+b)g_t=c\,g,
$
where $a,$ $b$ and  $c$ are constants. In general, for all operators from $A^{\max}$
the number $k$ of such independent equations is no greater than 2, otherwise
they form an incompatible system for $g$. Therefore, there exist three
distinct  cases for the value of $k$: $k=0,$ $k=1$ and $k=2$.

If $k=0,$ then~\eqref{c11} is not an equation but an identity and $c_0=c_1=c_2=0$.
So,  the kernel of maximal Lie invariance algebras of equations from~\eqref{EqSenthilkumaran&Pandiaraja&Vaganan2008_h=0} (i.e., the Lie invariance algebra admitted by~\eqref{EqSenthilkumaran&Pandiaraja&Vaganan2008_h=0} for arbitrary $g$) is the
one-dimensional algebra $\langle\partial_x\rangle$.

If~$k=1,$ then up to $G^{\sim}_0$-equivalence, $g$ is either a power function, $g=\varepsilon t^{\rho}$, or  exponential, $g=\varepsilon e^t$,  Here, $\rho$ and $\varepsilon$
are nonzero constants and $\varepsilon=\pm1\bmod G^{\sim}_0$.
If $g=\varepsilon  t^\rho$ and $\rho\neq0,$ then
\[\Gamma=c_1t\partial_t+\left(\tfrac{\rho+1}{3}c_1x+c_3\right)\partial_x+ \tfrac{\rho-2}{3n}c_1u\partial_u.\]
In the exponential case, $g=\varepsilon e^t$, the general form of the infinitesimal operator is
\[\Gamma=3nc_0\partial_t+(nc_0x+c_3)\partial_x+c_0u\partial_u.\]

In both these cases, the maximal Lie-invariance algebras are two-dimensional with basis operators presented in Cases 1 and 2 of Table~1. In Case 1 with $\rho=-1$ the Lie algebra is Abelian. The algebras adduced in Case 1 with $\rho\neq-1$ and Case 2 are non-Abelian.

If $k=2$, $g=\varepsilon=\const,$ where $\varepsilon=\pm1\bmod G^\sim_0$. The infinitesimal operator takes the form
\[\Gamma=(c_1t+c_2)\partial_t+\left(\tfrac13c_1x+c_3\right)\partial_x- \tfrac{2}{3n}c_1u\partial_u.\]
So, if $g$ is a nonzero constant, the maximal Lie invariance algebra of~\eqref{EqSenthilkumaran&Pandiaraja&Vaganan2008_h=0} with $n\neq1$ is three-dimensional
spanned by the operators presented in Case 3 of Table~1. This is the Lie algebra of the type $A^a_{3.5}$ with $a=1/3$~\cite{pate1977a}.
We summarize the above consideration by the statement.
\begin{theorem}
The kernel of the maximal Lie invariance algebras $A^{\rm max}$ of equations from class~\eqref{EqSenthilkumaran&Pandiaraja&Vaganan2008_h=0} {\rm(}resp.~\eqref{EqSenthilkumaran&Pandiaraja&Vaganan2008}{\rm)}
coincides with the one-dimensional algebra $\langle\partial_x\rangle$.
All possible $G_0^\sim$-inequivalent {\rm(}resp. $G^\sim$-inequivalent{\rm)} cases of extension of $A^{\rm max}$ are exhausted
by cases 1--3 of Table~\ref{TableLieSymG}.
\end{theorem}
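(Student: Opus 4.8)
The plan is to run the classical Lie-symmetry algorithm on equations of class~\eqref{EqSenthilkumaran&Pandiaraja&Vaganan2008_h=0} and to reduce the problem of describing all symmetry extensions to a classification, up to $G^\sim_0$-equivalence, of the single classifying equation for the arbitrary element~$g$; the three cases in the statement will correspond to the three admissible values of the furcate-split invariant~$k$.

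First I would impose the infinitesimal invariance criterion~\eqref{c2}, expand the third prolongation $\Gamma^{(3)}$, and confine the result to the manifold of~\eqref{EqSenthilkumaran&Pandiaraja&Vaganan2008_h=0}. Splitting the resulting identity with respect to the jet variables yields the determining equations~\eqref{deteq1a}--\eqref{deteq3a}. Solving the lower-order part of this system pins down the shape of the generator: $\tau=\tau(t)$, $\xi=\xi(t,x)$ with $\xi_{xx}=0$, and $\eta=\eta^1(t)u+\eta^0(t,x)$. Specializing to $n\neq0,1$ (the cases $n=0,1$ being linear or already treated elsewhere) then forces $\eta^1=c_0$ constant and $\eta^0=0$, so that $\tau=c_1t+c_2$, $\xi=(c_1+nc_0)x+c_3$, $\eta=c_0u$. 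Substituting back leaves the single classifying equation~\eqref{c11}, namely $(c_1t+c_2)g_t=(2c_1+3nc_0)g$, which couples the admitted symmetries to the form of~$g$.

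The core of the argument is the classification of~\eqref{c11} by the method of furcate split. Each admitted generator contributes an equation $(at+b)g_t=cg$; a fixed $g$ can satisfy at most two functionally independent equations of this type (three would over-determine~$g$), so $k\in\{0,1,2\}$. For $k=0$ the classifying equation is vacuous, which forces $c_0=c_1=c_2=0$ and leaves precisely the kernel $\langle\partial_x\rangle$. For $k=1$ one integrates the single surviving equation and uses the scalings and translations of Corollary~1 to gauge $g$ to a normal form, obtaining the power case $g=\varepsilon t^\rho$ and the exponential case $g=\varepsilon e^t$, each admitting a two-dimensional algebra (Cases~1 and~2). For $k=2$ the two compatible equations force $g_t=0$, whence $g=\varepsilon=\pm1\bmod G^\sim_0$ and a three-dimensional algebra arises (Case~3).

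I expect the delicate point to be the $k=1$ reduction: one must verify that the full one-parameter family of solutions of $(at+b)g_t=cg$ collapses modulo $G^\sim_0$ to exactly the two listed representatives, handling the branches $a=0$ and $a\neq0$ separately and checking the degenerate exponents (such as $\rho=-1$, where the algebra becomes Abelian). It then remains to confirm that the representatives in Cases~1--3 are pairwise $G^\sim_0$-inequivalent, and that the classification obtained for $h=0$ transfers verbatim to the full class~\eqref{EqSenthilkumaran&Pandiaraja&Vaganan2008} through the reduction~\eqref{EqSenthilkumaran&Pandiaraja&Vaganan2008Trans}, which yields the parenthetical $G^\sim$-statement.
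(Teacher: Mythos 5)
Your proposal is correct and follows essentially the same route as the paper: deriving the determining equations~\eqref{deteq1a}--\eqref{deteq3a}, specializing to $n\neq0,1$ to obtain the classifying equation~\eqref{c11}, and applying the method of furcate split with $k\in\{0,1,2\}$ to arrive at the kernel $\langle\partial_x\rangle$ and Cases~1--3 of Table~\ref{TableLieSymG}, with the transfer to the full class via the gauge $h=0$. The delicate points you flag (the $k=1$ branching into power and exponential normal forms, and the Abelian degeneration at $\rho=-1$) are exactly the ones the paper handles.
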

\begin{table}[h!]\small \renewcommand{\arraystretch}{1.75}
\begin{center}
\setcounter{tbn}{-1}
\refstepcounter{table}\label{TableLieSymG}
\textbf{Table~\thetable.}
The group classification of class~\eqref{EqSenthilkumaran&Pandiaraja&Vaganan2008} up to $G^\sim$-equivalence.
\\[2ex]
\begin{tabular}{|c|c|l|}
\hline
no.&$g(t)$&\hfil Basis of $A^{\max}$ \\
\hline
\refstepcounter{tbn}\label{TableLieSym_ker}\thetbn&$\forall$&$\partial_x$\\
\hline
\refstepcounter{tbn}\label{TableLieSym_2op}\thetbn&
$\varepsilon t^\rho$&$\partial_x,\,3nt\partial_t+(\rho+1)n x\partial_x+(\rho-2) u\partial_u$\\
\hline
\refstepcounter{tbn}\label{TableLieSym_3op}\thetbn&$\varepsilon e^{t}$&
$\partial_x,\,3n\partial_t+nx\partial_x+u\partial_u$\\
\hline
\refstepcounter{tbn}\label{TableLieSymHF_const}\thetbn&$\varepsilon $&
$\partial_x,\,\partial_t,\,3nt\partial_t+nx\partial_x-2u\partial_u$\\
\hline
\end{tabular}
\\[2ex]
\parbox{150mm}{Here $h=0\bmod\, G^\sim$, $\varepsilon=\pm1\bmod\, G^\sim,$ $\rho$ is an arbitrary nonzero constant.}
\end{center}
\end{table}

In Table~2, we also adduce the results of the group classification of~\eqref{EqSenthilkumaran&Pandiaraja&Vaganan2008} without gauging of $h$ and~$g$ by equivalence transformations. The extended classification list can be derived from that presented in Table~1 using equivalence transformations (the detailed procedure is described in~\cite{Vaneeva2012,Vaneeva2013}). It is easy to see that Table 2 includes all cases presented in~\cite{Senthilkumaran&Pandiaraja&Vaganan2008} as special cases.
\begin{table}[h!]\small \renewcommand{\arraystretch}{2}
\begin{center}
\setcounter{tbn}{-1}
\refstepcounter{table}\label{TableLieSymHF2}
\textbf{Table~\thetable.}
The group classification of class~\eqref{EqSenthilkumaran&Pandiaraja&Vaganan2008} using no equivalence.
\\[2ex]
\begin{tabular}{|c|c|l|}
\hline
no.&$g(t)$&\hfil Basis of $A^{\max}$ \\
\hline
\refstepcounter{tbn}\label{TableLieSym_ker2}\thetbn&$\forall$&$\partial_x$\\
\hline
\refstepcounter{tbn}\label{TableLieSym_2op2}\thetbn&
$\lambda\big(\int e^{-n\!\int\! h(t)\, dt}dt+\kappa\big)^{\rho}\!\!e^{-n\!\int\! h(t)\! dt}$&$\partial_x,\,H\partial_t+n(\rho+1)x\partial_x+(\rho-2-h(t)H) u\partial_u$\\
\hline
\refstepcounter{tbn}\label{TableLieSym_3op2}\thetbn&$\lambda\:  e^{\int\left( m e^{-n\!\int\! h(t)\, dt}-n h(t)\right) dt}$&
$\partial_x,\,3ne^{n\!\int\! h(t) dt}\partial_t+mn x\partial_x+\big(m-3nh(t)e^{n\!\int\! h(t) dt}\big)u\partial_u$\\
\hline
\refstepcounter{tbn}\label{TableLieSymHF_const2}\thetbn&$\lambda\:  e^{-n\!\int\! h(t) dt}$&
$\partial_x,\,e^{n\!\int\! h(t) dt}\left(\partial_t-h(t)u\partial_u\right),\,H\partial_t+nx\partial_x-(2+h(t)H) u\partial_u$\\
\hline
\end{tabular}
\\[2ex]
\parbox{150mm}{Here $\lambda $, $\kappa$, $\rho$, and $m$ are arbitrary constants with $\lambda \rho m\neq0$. The function $h(t)$ is arbitrary in all cases and
 $H=3ne^{n\!\int\! h dt}\big(\int e^{-n\!\int\! h dt} dt+\kappa\big)$.
In Case~3, $\kappa=0$ in the formula for $H$.}
\end{center}
\end{table}

\section{Similarity solutions of the generalized KdV equations}
Lie symmetries provide us with an algorithmic technique for finding exact solutions using the reduction method~\cite{Olver1986,Ovsiannikov1982}. It was found by Lie that if one Lie symmetry generator of an ODE is known, than the order of this ODE can be reduced by one, and if we know a Lie symmetry generator for a $n$-dimensional PDE, then it can be reduced to a $n{-}1$-dimensional PDE. This is true for Lie symmetry generators corresponding to
one-parameter Lie symmetry groups that act regularly and transversally on a manifold defined by this PDE~\cite{Olver1986}.
So, in our case of $(1+1)$-dimensional PDEs it is enough to perform   reductions with respect to one-dimensional subalgebras of the found maximal Lie invariance algebras to get reductions to ODEs.

Reductions should be performed using subalgebras from the optimal system~\cite{Olver1986}. Optimal systems of one-dimensional subalgebras of Lie invariance algebras from Table~1 are presented in Table~3.

\begin{table}[h!]\small \renewcommand{\arraystretch}{1.65}
\begin{center}
\textbf{Table 3.} Optimal systems of subalgebras of $A^{\rm max}$  presented in Table 1.
\\[2ex]
\begin{tabular}
{|c|l|}
\hline
\hfil no.
&
\hfil Optimal system
\\
\hline
$1_{\rho\neq-1}$
&
${\mathfrak g}^{\,}_1=\langle\partial_x\rangle,
\quad
{\mathfrak g}^{\,}_{1.1}=\langle 3nt\partial_t+(\rho+1)n x\partial_x+(\rho-2) u\partial_u\rangle$
\\
\hline
$1_{\rho=-1}$
&
${\mathfrak g}^{\,}_1=\langle\partial_x\rangle,
\quad
{\mathfrak g}^a_{1.2}=\langle nt\partial_t+a\partial_x-u\partial_u\rangle$
\\
\hline
2
&
${\mathfrak g}^{\,}_1=\langle\partial_x\rangle,
\quad
{\mathfrak g}^{\,}_2=\langle3n\partial_t+nx\partial_x+u\partial_u\rangle$
\\
\hline
3
&
${\mathfrak g}^{\,}_1=\langle\partial_x\rangle,
\quad
{\mathfrak g}^\sigma_{3.1}=\langle\partial_t+\sigma\partial_x\rangle,
\quad
{\mathfrak g}^{\,}_{3.2}=\langle3nt\partial_t+nx\partial_x-2u\rangle$
\\
\hline
\end{tabular}
\end{center}
In all cases $a\in\mathbb R$, $n\ne0$,  $\sigma\in\{-1,0,1\}$.
\end{table}

To construct an ansatz associated with
a basis  Lie symmetry generator $\Gamma=\tau\partial_t+\xi\partial_x+\eta\partial_u$ of one-dimensional subalgebra from optimal list
we should find a solution of the so-called  invariant surface condition
$\Gamma[u]=\tau u_t+\xi u_x-\eta=0$. In practise, the corresponding characteristic system
$\frac{{\rm d}t}{\tau}=\frac{{\rm d}x}{\xi}=\frac{{\rm d}u}{\eta}$
should be solved.

We do not consider reductions associated with the subalgebra $\mathfrak g_1=\langle\partial_x\rangle$
because they lead to constant solutions only.
Ansatzes and reduced equations that are obtained for equations from class~\eqref{EqSenthilkumaran&Pandiaraja&Vaganan2008_h=0}
by means of one-dimensional subalgebras from Table~3 are collected in Table~4.

\begin{table}[h!]\small \renewcommand{\arraystretch}{1.65}
\begin{center}
\textbf{Table 4.} Similarity reductions of the class~$u_t+u^nu_x+g(t)u_{xxx}=0$ with $ng\neq0$
\\
that correspond to the subalgebras presented in Table 3.
\\[2ex]
\begin{tabular}
{|c|c|l|c|c|l|l|}
\hline
\hfil no.&\hfil$g(t)$ &\hfil ${\mathfrak g}$ & $\omega$ &\hfil Ansatz, $u$ &\hfil Reduced ODE
\\
\hline
1&$\varepsilon t^{\rho},\;\rho\ne-1$ & ${\mathfrak g}^{\,}_{1.1}$ & $xt^{-\frac{\rho+1}{3}}$ &
$t^{\frac{\rho-2}{3n}}\varphi(\omega)$ & $\varepsilon \varphi'''+\left(\varphi^n-\frac{\rho+1}{3}\omega\right)\varphi'+\frac{\rho-2}{3n}\varphi=0$
\\
\hline
2&$\varepsilon t^{-1}$ & ${\mathfrak g}^a_{1.2}$ & $x-\frac{a}{n}\ln t$ & $t^{-\frac{1}{n}}\varphi(\omega)$ & $\varepsilon
\varphi'''+\left(\varphi^n-\frac{a}{n}\right)\varphi'-\frac1nf=0$
\\
\hline
3&$\varepsilon e^t$ & ${\mathfrak g}^{\,}_{2}$ & $xe^{-\frac13t}$ & $e^{\frac1{3n}t}\varphi(\omega)$ & $\varepsilon
\varphi'''+\left(\varphi^n-\frac1{3}{\omega}\right)\varphi'+\frac1{3n}\varphi=0$
\\
\hline
4&$\varepsilon$ & ${\mathfrak g}^\sigma_{3.1}$ & $x-\sigma t$ & $\varphi(\omega)$ & $\varepsilon \varphi'''+(\varphi^n-\sigma)\varphi'=0$
\\
\hline
5&$\varepsilon$ & ${\mathfrak g}^{\,}_{3.2}$ & $xt^{-\frac13}$ & $t^{-\frac2{3n}}\varphi(\omega)$ &
$\varepsilon \varphi'''+\left(\varphi^n-\frac1{3}{\omega}\right)\varphi'-\frac2{3n}\varphi=0$
\\
\hline
\end{tabular}
\end{center}
In all cases $a\in\mathbb R$, $n\ne0$,  $\sigma\in\{-1,0,1\}$, $\varepsilon=\pm1\bmod G^\sim.$
\end{table}
It is possible to get exact traveling wave solutions of the equation $u_t+u^nu_x+\varepsilon u_{xxx}=0$ solving   the reduced ODE from Case~4 of Table~4,
\begin{equation}
\varepsilon \varphi'''+(\varphi^n-\sigma)\varphi'=0.
\end{equation}
If $\sigma\neq0$, two  partial  solutions of this equation are, for example,
$\varphi=\left(-\frac{\sigma(n+1)(n+2)}{2\sinh^2\left(\frac n2\sqrt{\frac {\sigma}{\varepsilon}}\omega+C\right)}\right)^\frac1n,$ and $\varphi=\left(\frac{\sigma(n+1)(n+2)}{2\cosh^2\left(\frac n2\sqrt{\frac {\sigma}{\varepsilon}}\omega+C\right)}\right)^\frac1n,$ where $C$ is an arbitrary constant.
They lead to the following traveling wave solutions
\[u=\left(-\frac{\sigma(n+1)(n+2)}{2\sinh^2\left(\frac n2\sqrt{\frac \sigma{\varepsilon}}(x-at)+C\right)}\right)^\frac1n,\quad u=\left(\frac{\sigma(n+1)(n+2)}{2\cosh^2\left(\frac n2\sqrt{\frac \sigma{\varepsilon}}(x-at)+C\right)}\right)^\frac1n,\]
of the generalized KdV equation with constant coefficients, $u_t+u^nu_x+\varepsilon u_{xxx}=0$. Note that some exact solutions were constructed in the literature for variable coefficient generalized KdV equations of the form $u_t+u^nu_x+\alpha(t) u+\varepsilon e^{-n\!\int\!\alpha(t)\,{\rm d}t} u_{xxx}=0$, (see, e.g.,~\cite{Biswas,Yang}), but it is due to the fact that the latter equations are reduced to constant coefficient ones.
Only a few exact solutions  for truly  variable coefficient generalized KdV equations are known.

\section{Boundary value problem for generalized KdV equations}
There are several approaches for exploiting Lie symmetries to reduce boundary value problems (BVPs) for PDEs to those for ODEs.
The classical technique
is to require that both equation and boundary conditions are left invariant under the one-parameter Lie group
of infinitesimal transformations. Of course, the infinitesimal approach is usually applied  (see, e.g.,~\cite[Section 4.4]{Bluman&Anco2002}).
We apply this technique for an IBVP for variable coefficient generalized KdV equations with and without linear damping terms.

\subsection{IBVP for generalized KdV without linear damping}
We apply at first this procedure to a problem with the governing equation being the KdV of the form~(\ref{EqSenthilkumaran&Pandiaraja&Vaganan2008_h=0}), i.e., we consider the following initial and boundary value problem
\begin{gather}\label{BV_KdV}
u_t+u^nu_x+g(t)u_{xxx}=0,\quad t>0,\quad x>0, \\[1ex]\label{BV_KdV_bc}
\begin{array}{@{}ll}
u(x,0)=0, &x>0,\\[0.5ex]
u(0,t)=q(t),& t>0,\\[0.5ex]
u_x(0,t)=0, &t>0,\\[0.5ex]
u_{xx}(0,t)=0, & t>0,
\end{array}
\end{gather}
where $q(t)$ is a nonvanishing smooth function of its variable.

The procedure starts by assuming a general symmetry of the form
\begin{equation}\label{general_symmetry}
\Gamma =\sum_{i=1}^n\alpha_i\Gamma_i,
\end{equation}
where $n$ is the number of basis operators of the maximal Lie invariance algebra of the given PDE and $\alpha_i,~i = 1,\dots,n$, are constants to be determined.

We consider Case 1 of Table 1. The general symmetry (\ref{general_symmetry}) takes the form
\[
\Gamma=\alpha_1\partial_x+\alpha_2\left [3nt\partial_t+(\rho+1)n x\partial_x+(\rho-2) u\partial_u \right ].
\]
Application of $\Gamma$ to the first boundary condition
$x=0,~u(t,0)=q(t)$ gives
\[
\alpha_1=0~~~~\mbox{and}~~~~q(t)=\gamma t^{\frac{\rho-2}{3n}},\quad \gamma=\const.
\]
Using the second extension of $\Gamma$,
\begin{gather*}
\Gamma^{(2)}=3nt\partial_t+(\rho+1)n x\partial_x+(\rho-2) u\partial_u\\\phantom{\Gamma^{(2)}=\,} +(\rho-n\rho-n-2)u_x\partial_{u_x}+(\rho-2n\rho-2n-2)u_{xx}\partial_{u_{xx}},
\end{gather*}
where the unused terms have been ignored, it can be shown that it leaves the initial condition and the remaining three boundary conditions invariant. Finally, (see Case 1 of Table~5) symmetry $\Gamma$ produces the ansatz
\begin{equation}
u=t^{\frac{\rho-2}{3n}}\varphi(\omega),\qquad\omega=xt^{-\frac{\rho+1}3},
\label{eq:Transf_BV_KdV}
\end{equation}
which reduces the problem (\ref{BV_KdV}) into the initial value problem
\begin{gather}\arraycolsep=0ex
\begin{array}{l}
\varepsilon \varphi'''+\varphi^n\varphi'-\frac{\rho+1}3\omega\varphi'+\frac{\rho-2}{3n}\varphi=0, \\[2ex]
\varphi(0)=\gamma, \qquad
 \varphi'(0)=0, \qquad
 \varphi''(0)=0.
 \end{array}
 \label{eq:IVP_from_BV_KdV}
\end{gather}


In Case 2 of Table~1, the corresponding symmetry does not leave the boundary conditions invariant and for Case 3 we obtain the above results with $\rho=0$.

\subsection{IBVP for generalized KdV with linear damping}

We consider the  IBVP for the generalized KdV equation with variable-coefficient linear damping
\begin{gather}\label{BV_KdV2}
u_t+u^nu_x+\frac j{t}u+g(t)u_{xxx}=0,\quad t>0,\quad x>0,\quad 
\end{gather}
with initial and boundary conditions~\eqref{BV_KdV_bc}. In the previous section we have shown that the symmetry operator which is admitted by both an equation from
class~\eqref{EqSenthilkumaran&Pandiaraja&Vaganan2008} and initial and boundary conditions~\eqref{BV_KdV_bc} with $q$ being a power function is the so-called dilatation operator, i.e., the operator corresponding to the one-parameter Lie group of scalings of the variables $t$, $x$ and $u$.
Equation~\eqref{BV_KdV2} admits a Lie symmetry generator which keeps the boundary conditions invariant if and only if $g$ is a power function or constant (Cases 1 and 3 of Table~2). Substituting $h=j/t$ into the formulas for $g$ and corresponding symmetry generators presented in Case 1 of Table~2 (without loss of generality we set $\kappa=0$) we find that
the equation
\begin{gather}\label{BV_KdV2}
u_t+u^nu_x+\frac j{t}u+\lambda t^{\rho(1-nj)-nj}u_{xxx}=0,
\end{gather}
admits the Lie symmetry generators $\partial_x$ and
\[\Gamma=\frac{3n}{1-nj}t\partial_t+n(\rho+1)x\partial_x+\left(\rho-2-\frac{3nj}{1-nj}\right)u\partial_u.\]
Boundary conditions~\eqref{BV_KdV_bc}  are left invariant with respect to the symmetry transformation generated by the operator $\Gamma$ if and only if  $q=\gamma t^{\frac{(\rho-2)(1-nj)-3nj}{3n}}$, where $\gamma=\const.$ Therefore we can apply Lie symmetries to solve the following BVP for the generalized KdV equation with linear damping
\begin{gather}\label{BV_KdV3}
u_t+u^nu_x+\frac j{t}u+\lambda t^{\rho(1-nj)-nj}u_{xxx}=0,\quad t>0,\quad x>0, \\[1ex]\label{BV_KdV_bc3}
\begin{array}{@{}ll}
u(x,0)=0, &x>0,\\[0.5ex]
u(0,t)=\gamma t^{\frac{\rho(1-nj)-nj-2}{3n}},& t>0,\\[0.5ex]
u_x(0,t)=0, &t>0,\\[0.5ex]
u_{xx}(0,t)=0, & t>0.
\end{array}
\end{gather}
The symmetry $\Gamma$ produces the transformation
\[
u=t^{\frac{\rho(1-nj)-nj-2}{3n}}\varphi(\omega),\qquad\omega=xt^{\frac{(\rho+1)(nj-1)}3},
\]
which reduces the problem (\ref{BV_KdV3})--(\ref{BV_KdV_bc3}) to
\begin{gather}\arraycolsep=0ex\label{IVP2}
\begin{array}{l}
\lambda \varphi'''+\varphi^n\varphi'+\frac{(\rho+1)(nj-1)}3\omega\varphi'+\frac{(\rho-2)(1-nj)}{3n}\varphi=0, \\[2ex]
\varphi(0)=\gamma, \qquad
 \varphi'(0)=0, \qquad
 \varphi''(0)=0.
 \end{array}
\end{gather}

For $j=1/2$ and $j=1$ (\ref{BV_KdV3}) becomes generalized cylindrical and spherical KdV equations, respectively.

\section{Numerical solution}
In the previous section IBVPs for certain generalized KdV equations were reduced to initial value problems for third-order nonlinear ODEs.
Now we can use well-developed numerical techniques (e.g. finite differences) to solve the reduced problem.

\subsection{Finite-difference scheme}
We focus our numerical investigation on solving the initial value problem (\ref{eq:IVP_from_BV_KdV}).
\noprint{ \begin{gather}\arraycolsep=0ex
\begin{array}{l}
\varepsilon \varphi'''+\varphi^n\varphi'-\frac{\rho+1}3\omega\varphi'+\frac{\rho-2}{3n}\varphi=0, \\[2ex]
\varphi(0)=\gamma, \qquad
 \varphi'(0)=0, \qquad
 \varphi''(0)=0.
 \end{array}
\end{gather}}
Then, with the aid of transformation (\ref{eq:Transf_BV_KdV}) it is easy to evaluate the solution $u(x,t)$ of problem (\ref{BV_KdV})--(\ref{BV_KdV_bc}). The consideration for~\eqref{IVP2}
can be performed in an analogous way.

Utilizing second-order finite difference approximations for the first and third order derivative, namely,
\begin{gather*}
\varphi'=\frac{\varphi_{i+1}-\varphi_{i-1}}{2h}-\frac{h^2}{6}\varphi'''|_{\omega_i}+\mathcal{O}(h^3),\\
\varphi'''=\frac{\varphi_{i+2}-2\varphi_{i+1}+2\varphi_{i-1}-\varphi_{i-2}}{2h^3}
-\frac{h^2}{4}\varphi'''''|_{\omega_i}+\mathcal{O}(h^3),
\end{gather*}
we arrive at the following numerical scheme:
\begin{equation*}
   \begin{split}
   \frac{\varepsilon}{2 h^3}(\varphi_{i+2} &- 2\varphi_{i+1} + 2\varphi_{i-1} - \varphi_{i-2})\\
   &+ \frac{\varphi_i^n}{2h}(\varphi_{i+1}-\varphi_{i-1}) - \frac{\rho+1}{6h}\omega_i(\varphi_{i+1}-\varphi_{i-1})
   + \frac{\rho-2}{3n}\varphi_i = 0.
   \end{split}
\end{equation*}  
We then assume a general interval of interest for $\omega$, say $\omega \in [a,b]$, and
define the grid for $\omega$
\begin{equation}
   \omega_i = a + (i-1)h\ , \quad h=\frac{b-a}{N}\ , \quad i=1,2,3,\ldots,N,N+1.
   \label{eq:grid_1}
\end{equation}
The initial conditions are discretized as follows:
\begin{subequations}
\label{eq:SE_FDScheme}
\begin{equation}
   \varphi_1 = \gamma, \qquad \varphi_1 = \varphi_2, \qquad \varphi_3 = \varphi_2.
   \label{eq:SE_FDScheme_ICs}
\end{equation}
Clearing the denominators and solving for $\varphi_{i+2}$ yields the following simple explicit scheme

\begin{equation}
      \begin{split}
      \varphi_{i+2} &= \frac{ 1}{ \varepsilon}\left\{\left(2 \varepsilon + \frac{\rho+1}{3}h^2\omega_i\right)\varphi_{i+1} - \frac{2 h^3}{3n}(\rho-2)\varphi_i  -  \left(2 \varepsilon + \frac{\rho+1}{3}h^2\omega_i\right)\varphi_{i-1} \right.\\
      &\qquad\quad+ \left. \varepsilon\varphi_{i-2} + g_i  \right\},
      \raisetag{1\baselineskip}
      \label{eq:SE_FDScheme_main}
      \end{split}
   \end{equation}
where $g_i = -h^2\varphi_i^n(\varphi_{i+1} - \varphi_{i-1})$ and $i=3,4,5,\ldots,N-1$. To initialize the iteration process for the spatial dimension $\omega$,  we set $i=2$ in (\ref{eq:SE_FDScheme_main}) and obtain
\begin{equation}
      \begin{split}
      \varphi_{4} &= \frac{1}{ \varepsilon}\left\{\left(2 \varepsilon + \frac{\rho+1}{3}h^2\omega_2\right)\varphi_{3} - \frac{2 h^3}{3n}(\rho-2)\varphi_2  -  \left(2 \varepsilon + \frac{\rho+1}{3}h^2\omega_2\right)\varphi_{1} \right.\\
      &\qquad\quad+ \left. \varepsilon\varphi_{0} + g_2 \right\},
      \raisetag{1\baselineskip}
      \label{eq:SE_FDScheme_init}
      \end{split}
   \end{equation}
\end{subequations}
where $g_2 = -h^2\varphi_2^n(\varphi_{3} - \varphi_{1})$ and $\varphi_0 = \varphi_1$. The value of $\varphi_0 = \varphi(-h)$ was computed by applying central and/or backward finite differences to the initial conditions.

The nonlinear system of difference equations (\ref{eq:SE_FDScheme}) is solved using fixed-point iterations
\begin{subequations}
    \label{eq:SE_FDScheme__Iter}
   \begin{equation}
      \begin{split}
      \varphi_{i+2}^{k+1} &= \frac{1}{\varepsilon}\left\{\left(2 \varepsilon + \frac{\rho+1}{3}h^2\omega_i\right)\varphi_{i+1}^{k+1} - \frac{2 h^3}{3n}(\rho-2)\varphi_i^{k+1}  -  \left(2 \varepsilon + \frac{\rho+1}{3}h^2\omega_i\right)\varphi_{i-1}^{k+1} \right.\\
      &\qquad\quad+ \left. \varepsilon\varphi_{i-2}^{k+1} + g_i^k  \right\},
      \raisetag{1\baselineskip}
      \label{eq:SE_FDScheme_main_Iter}
      \end{split}
   \end{equation}
   \begin{equation}
      \begin{split}
      \varphi_{4}^{k+1} &= \frac{1}{\varepsilon}\left\{\left(2 \varepsilon + \frac{\rho+1}{3}h^2\omega_2\right)\varphi_{3}^{k+1} - \frac{2 h^3}{3n}(\rho-2)\varphi_2^{k+1}  -  \left(2 \varepsilon + \frac{\rho+1}{3}h^2\omega_2\right)\varphi_{1}^{k+1} \right.\\
      &\qquad\quad+ \left. \varepsilon\varphi_{0}^{k+1} + g_2^k \right\}.
      \raisetag{1\baselineskip}
      \label{eq:SE_FDScheme_init_Iter}
      \end{split}
   \end{equation}
\end{subequations}
The algorithm is initialized at $k=0$ using the initial condition, i.e., $\varphi^0_1=\varphi^0_2=\varphi^0_3=\gamma$ and $\varphi^0_i=0$ for $i=4,\ldots,N,N+1$. Then, for $k=1,2,3,4,\ldots$, the values of $\varphi_i^k$ are updated for all $i=4,\ldots,N,N+1$ except $\varphi_j^k\,, \ j=1,2,3$, which remain fixed at the initial value $\gamma$ for all $k$.  The algorithm is terminated  when the following convergence criterion is satisfied
   \[ \frac{\max\limits_{1\leq i \leq N}|\varphi_i^{k+1}-\varphi_i^{k}|}{\max\limits_{1\leq i \leq N}|\varphi_i^{k+1}|}\leq 10^{-8}.\]

An exact analytic solution for IVP (\ref{eq:IVP_from_BV_KdV}) is not known for any admissible set of parameters except for when $\rho=2$, which yields the constant solution. The accuracy of numerical scheme (\ref{eq:SE_FDScheme}) is assessed by applying the scheme for different mesh densities (see Fig.~\ref{fig:ErrorVsNMAX_SE}). We have chosen the parameter set $n=1$, $\rho=1$, $\varepsilon=-1$ for this investigation. The initial condition value was set at $\gamma=0.5$ and the solution interval was specified as $[0,50]$. We will refer to the solution of (\ref{eq:IVP_from_BV_KdV}) for these parameter values as the ``standard solution".
  \begin{figure}[t!]
  \vspace*{-3mm}
  \centerline{\hspace{-4mm}
  \includegraphics[width=0.7\textwidth]{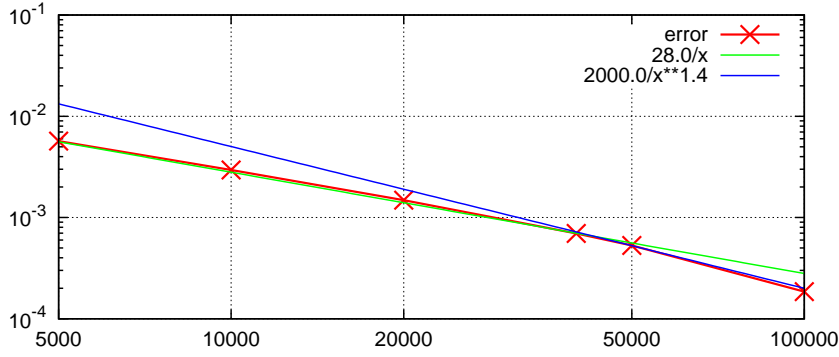}}
  \vspace*{-3mm}
   \caption{Logarithmic plot of the absolute error as a function of the total number of gridpoints~$N$. Governing parameters: $n=1$, $\rho=1$, $\varepsilon=-1$, $\gamma=0.5$ and $[a,b]=[0,50]$ (standard solution). }
   \label{fig:ErrorVsNMAX_SE}
  \end{figure}
The part of the exact solution is played by the numerical solution obtained on a very fine grid $\hat{\varphi}$. Here, $\hat{\varphi}$ is the solution for $N=200000$ where $N$ is the number of grid points used to discretize our interval (see (\ref{eq:grid_1})). The absolute error for a numerical solution $\varphi^N$ obtained using $N$ grid points is defined by $\max\limits_{1\leq i \leq N}\|\hat{\varphi_i} - \varphi_{i,N}\|$. It is observed that the absolute error is inversely proportional to $N$. This is slightly improved for $N \geq 40000$.

Our numerical method was further validated by applying it to a similar nonlinear initial value problem of the same order that appeared in the literature \cite{AbdelMalek&Helal2011} in the context of solving IBVP for generalized Burgers--KdV equations. This reads
\begin{subequations}
   \label{eq:Mina_IVP}
   \begin{equation}
       \beta F''' - F'' - \alpha n F^{n-1} F' + \tfrac{1}{2}\eta F' + \tfrac{1}{2n-2} F = 0  ,
      \label{eq:Mina_Cubic_ODE}
   \end{equation}
   \begin{equation}
       F(0) = \gamma ,\quad F'(0) = 0 ,\quad F''(0) = 0 .
       \label{eq:Mina_ICs}
   \end{equation}
\end{subequations}
where $F=F(\eta)$ is the sought function, $\eta$ the independent variable and $\beta$, $\alpha$, $n$ and $\gamma$ parameters. In Fig.~\ref{fig:Mina_Comparison} we present a comparison between our results and \cite{AbdelMalek&Helal2011} for the case $n=2$, $\alpha=1$, $\beta=10$, $\gamma=0.5$. It is easy to see that the results are identical. This was observed for all parameter values investigated.
 \begin{figure}[h!]
  \vspace*{-4mm}
  \centerline{\hspace{-4mm}
  \includegraphics[width=0.7\textwidth]{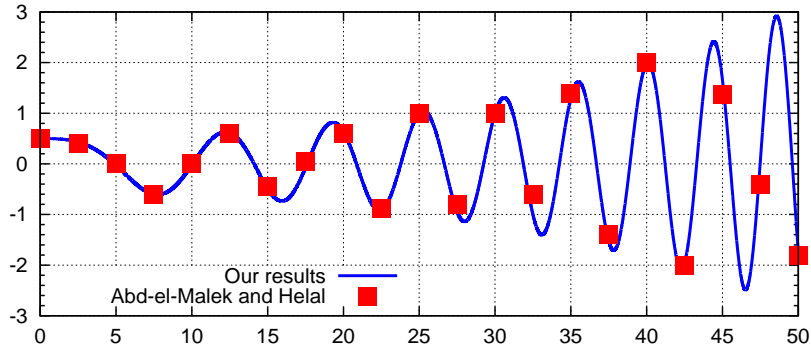}}
  \vspace*{-3mm}
   \caption{Solution of (\ref{eq:Mina_IVP}) for $n=2$, $\alpha=1$, $\beta=10$, $\gamma=0.5$ and $[a,b]=[0,50]$.  Squares: The solution in \cite{AbdelMalek&Helal2011} sampled at equidistant nodes of distance 2.5 on the $\eta$ axis. Solid line: Our results.}
   \label{fig:Mina_Comparison}
  \end{figure}

\subsection{Numerical results}

We investigate the effect of varying the governing parameters in (\ref{eq:IVP_from_BV_KdV}). Because the parametric space is vast, the method we follow here is to vary only one parameter whilst keeping all others fixed at the so-called ``standard solution'' values ($n=1$, $\rho=1$, $\varepsilon=-1$, $\gamma=0.5$, $[a,b]=[0,50]$). All solutions in this subsection were computed for $N=100000$ $\omega$-grid points.

  \begin{figure}[h!]
    \vspace*{-4mm}
    \centerline{\hspace{-4mm}
    \includegraphics[width=0.7\textwidth]{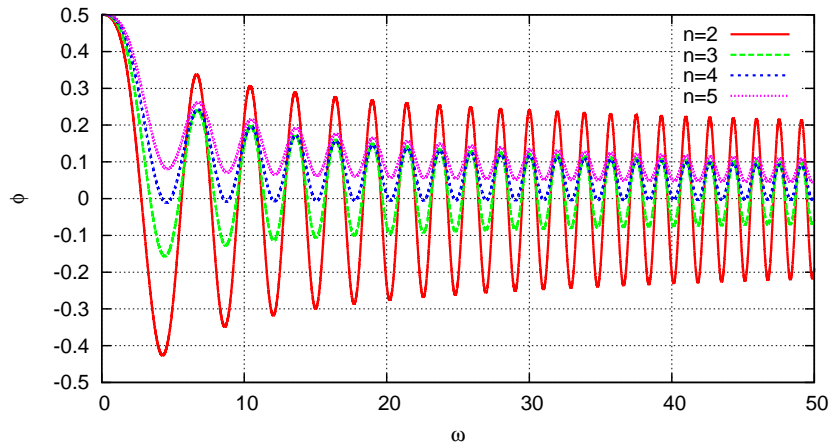}}
    \vspace*{-3mm}
    \caption{Solution of (\ref{eq:IVP_from_BV_KdV}) for various values of the power $n$ of the nonlinear term.}
    \label{fig:various_nonlin_power}
  \end{figure}
  \begin{figure}[h]
    \vspace*{-4mm}
    \centerline{\hspace{-4mm}
    \includegraphics[width=0.7\textwidth]{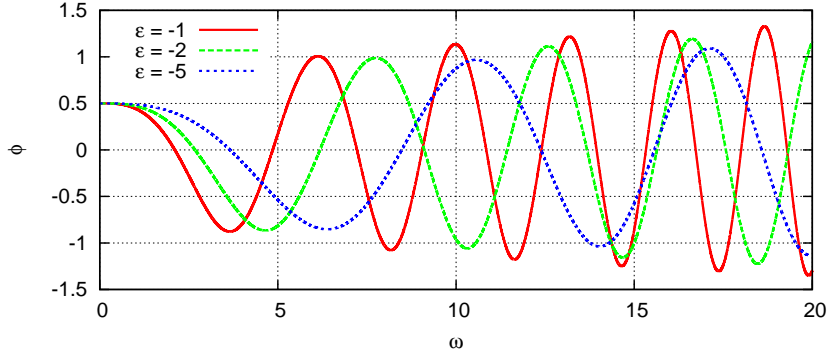}}
    \vspace*{-3mm}
    \caption{Solution of (\ref{eq:IVP_from_BV_KdV}) for $\varepsilon = -1, -2, -5$.}
    \label{fig:various_epsilon}
  \end{figure}

In Fig.~\ref{fig:various_nonlin_power} we present the effect of the order $n$ of the nonlinear term. Our solution can be described as a damped oscillation. Increasing $n$ results in a decrease of the solution ``amplitude'', however, the ``wavelength'' is unaffected.

The value of the coefficient $\varepsilon$ of the third derivative  affects the solution in two ways (see Fig.~\ref{fig:various_epsilon}):  Increasing negative values of $\varepsilon$ in magnitude, widens the oscillations, i.e., increases the solution ``wavelength''. In addition, the oscillation ``amplitude'' decreases. It was observed that the solutions for positive epsilon values are oscillatory but with an increasing amplitude (divergent). This is an innate feature of the problem and not a numerical artifact and was also observed in \cite{AbdelMalek&Helal2011} for similar cases.

\vspace{2mm}

 \begin{figure}[h]
    \vspace*{-4mm}
    \centerline{\hspace{-4mm}
    \includegraphics[width=0.7\textwidth]{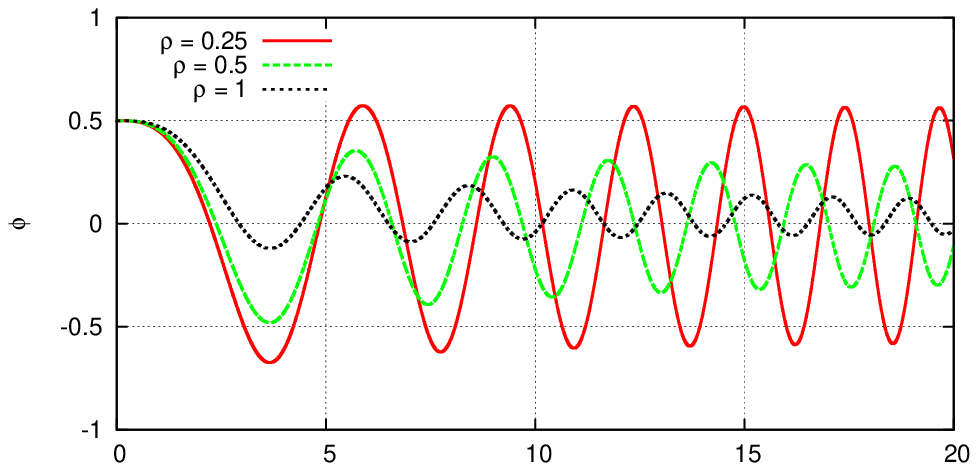}}
     \centerline{\hspace{-3mm}
     \includegraphics[width=0.7\textwidth]{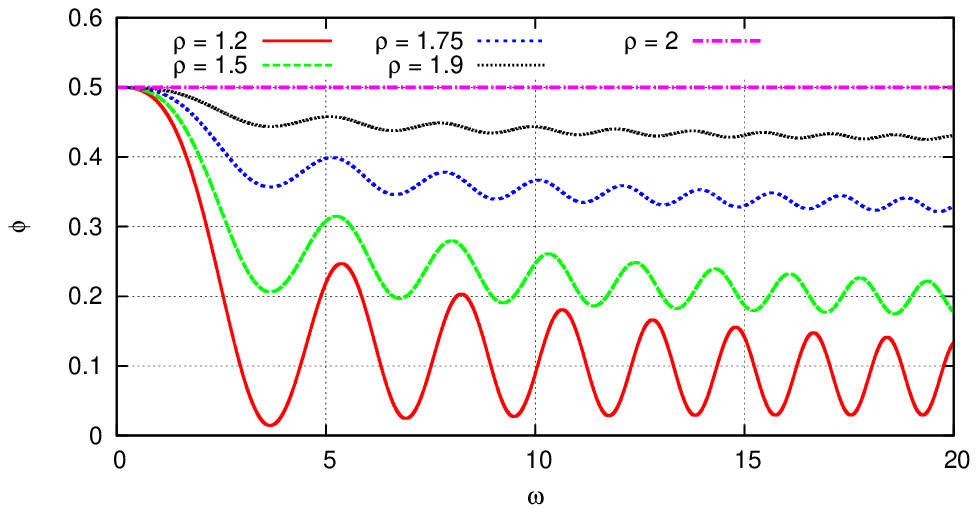}}
    \vspace*{-4mm}
    \caption{Top panel: Solution of (\ref{eq:IVP_from_BV_KdV}) for $\rho = 0.5, 0.25, 1$. Bottom panel: Solution of (\ref{eq:IVP_from_BV_KdV}) for $\rho = 1.2, 1.5, 1.75, 1.9, 2$.}
    \label{fig:various_rho_1}
 \end{figure}

Varying $\rho$ has a drastic effect on the solution form (see Fig. 5). We have found that for $\rho < 0$ the solution diverges abruptly to infinity and is not oscillatory. For $\rho > 2$  the solutions are divergent but oscillatory resembling the results published in \cite{AbdelMalek&Helal2011}. As expected, for $\rho=2$ we obtain the constant solution $\varphi = 2$, which can be easily verified analytically. Increasing $\rho$ from 0 to 1 decreases the oscillation amplitude. Specifically, for $\rho \in [0.5,2)$ we get decreasing oscillations and for $\rho \in [1,2)$ the decreasing oscillations are so pronounced the solution resembles a soliton with a tail.

Having computed the numerical solution of (\ref{eq:IVP_from_BV_KdV}) we construct the solution $u(x,t)$ of (\ref{BV_KdV})--(\ref{BV_KdV_bc}) utilizing (\ref{eq:Transf_BV_KdV}). The case illustrated in Figs.~\ref{fig:3D_Surface} and~\ref{fig:Fence_Plot} corresponds to the so-called ``standard solution''. We observe that the function $u(x,\cdot)$ for fixed $t$ is an oscillatory function of $x$ whereas for fixed $x$, $u(\cdot,t)$ decreases asymptotically with $t$.
\begin{figure}[h!]
    \vspace*{-4mm}
    \centerline{\hspace{-4mm}
    \includegraphics[width=0.7\textwidth]{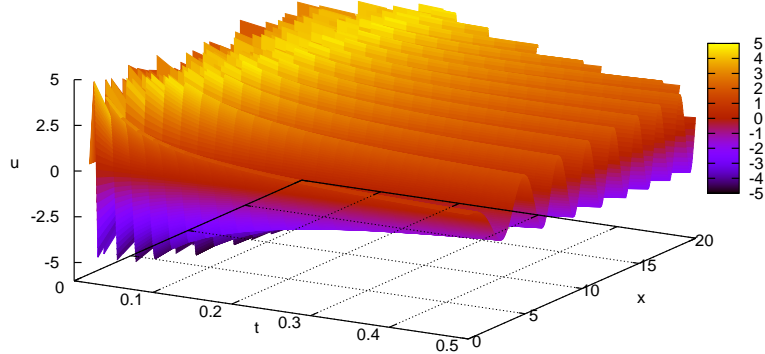}}
    \vspace*{-4mm}
    \caption{Surface plot of $u=u(x,t)$ for $n=1$, $\rho=1$, $\varepsilon=-1$, $\gamma=0.5$. The range was truncated to $-5\leq u \leq 5$. }
    \label{fig:3D_Surface}
\end{figure}
\vspace*{-5mm}
\begin{figure}[h!]
    \vspace*{-4mm}
    \centerline{\hspace{-4mm}
    \includegraphics[width=0.7\textwidth]{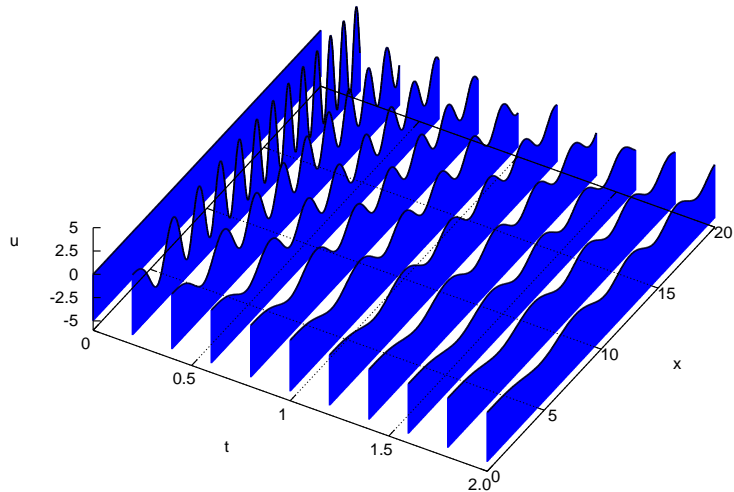}}
    \vspace*{-5mm}
    \caption{Fence plot of the case presented in Fig.~\ref{fig:3D_Surface}. The snapshots of our solution were taken at $t_i = i \Delta t$, where $i=0,1,2,\ldots,10$ and $\Delta t = 0.2$.}
    \label{fig:Fence_Plot}
\end{figure}


\vspace{-5mm}

\section{Conclusion}
Lie symmetry methods are a well-known powerful tool for the construction of exact solutions for nonlinear differential equations, both ordinary and partial. Unfortunately, it is often not known how to solve the reduced differential equation analytically. This is the case for variable coefficient generalized KdV equations that are inequivalent by point transformations  to constant  coefficient ones.
 Nevertheless, Lie symmetry methods can be useful even in those cases when reduced ODEs cannot be solved analytically but only numerically.
We illustrate this fact considering initial and boundary value problem for subclasses of generalized KdV equations admitting Lie symmetries extension.

A group classification for variable coefficient generalized KdV equations~\eqref{EqSenthilkumaran&Pandiaraja&Vaganan2008} is carried out exhaustively. The results are presented in two ways: up to $G^\sim$-equivalence (Table~1) and without simplification by equivalence transformations (Table~2). Similarity solutions are classified (Table~4).
The derived Lie symmetries of a generalized KdV equation are employed  to IBVP  (\ref{BV_KdV})--(\ref{BV_KdV_bc})   transforming
it into an IVP for an ODE~\eqref{eq:IVP_from_BV_KdV}. The resulting nonlinear problem is solved numerically with the aid of a second-order
finite-difference scheme with fixed-point iterations. The scheme was validated by applying it to similar problems
in the literature which were solved using other methods and the results were found to be in excellent agreement.
The effect of the governing parameters on the solutions of (\ref{BV_KdV})--(\ref{BV_KdV_bc}) was examined and solutions of the original PDE were constructed using the aforementioned transformations.

\subsection*{Acknowledgements}

OOV  expresses the gratitude to the hospitality
shown by the University of Cyprus during her visits to the University. The authors also wish to thank Profs. R. Popovych and V. Boyko for useful comments.



\begin{thebibliography}{99}\itemsep=1ex
\footnotesize
\bibitem{AbdelMalek1998}
M.B. Abd-el-Malek, Application of the group-theoretical method to physical problems, {\it J. Nonlinear Math. Phys.}  {\bf 5}  (1998), 314--330.

\bibitem{AbdelMalek&Helal2011} M.B. Abd-el-Malek and M.M. Helal, Group method solutions of the generalized forms of Burgers, Burgers-KdV and KdV equations with time-dependent variable coefficients, {\it Acta Mechanica} {\bf 221} (2011), 281--296.

\bibitem{Biswas} A. Biswas, Solitary wave solution for the generalized KdV equation with time-dependent damping and dispersion, {\it Commun. Nonlinear Sci. Numer. Simulat.} {\bf 14} (2009), 3503--3506.

\bibitem{Bluman&Cole1969} G.W. Bluman and J.D. Cole, The general similarity solution of the heat equation, {\it J. Math. Mech.} {\bf 18} (1969), 1025--1042.

\bibitem{Bluman&Anco2002} G.W. Bluman and S.C. Anco, {\it Symmetry and integration methods for differential equations}, Springer-Verlag, New York, 2002.

\bibitem{Bluman1974} G.W. Bluman, Application of the general similarity solution of the heat equation to boundary-value problems, {\it Quart. Appl. Math.} {\bf 31} (1974), 403--415.


\bibitem{Gungor&Lahno&Zhdanov2004}
 F. G\"ung\"or, V.I. Lahno and R.Z. Zhdanov,
Symmetry classification of KdV-type nonlinear evolution equations,
{\it J. Math. Phys.}    {\bf 45} (2004) 2280--2313;  arXiv:nlin/0201063.


\bibitem{Kingston&Sophocleous1998}
J.G. Kingston and C. Sophocleous, On form-preserving point
transformations of partial differential equations, {\it J. Phys. A:
Math. Gen.} {\bf 31} (1998), 1597--1619.


\bibitem{Magadeev1993}
 B.A. Magadeev, On group classification of nonlinear evolution equations,
{\it Algebra i Analiz} {\bf 5} (1993) 141--156 (in Russian);
translation in {\it St. Petersburg Math. J.}    {\bf 5} (1994) 345--359.







\bibitem{Olver1986}
P. Olver, {\it Applications of Lie groups to differential equations},
New-York, Springer-Verlag, 1986.


\bibitem{Ovsiannikov1982}
L.V. Ovsiannikov, {\it Group analysis of differential equations}, New
York, Academic Press, 1982.


\bibitem{pate1977a}
J. Patera, P. Winternitz,
Subalgebras of real three- and four-dimensional Lie algebras,
{\it J.~Math. Phys.} {\bf 18} (1977), no.~7, 1449--1455.


\bibitem{Popovych&Ivanova2004}
R.O. Popovych and N.M. Ivanova, New results on group classification of nonlinear diffusion-convection equations, {\it J. Phys. A} {\bf 37} (2004), 7547--7565; arXiv:math-ph/0306035.

\bibitem{popo2010a}
R.O. Popovych, M. Kunzinger and H. Eshraghi,
Admissible transformations and normalized classes of nonlinear Schr\"odinger equations,
{\it Acta Appl. Math.} {\bf  109} (2010), 315--359, arXiv:math-ph/0611061.

\bibitem{Popovych&Vaneeva2010}
R.O. Popovych and O.O. Vaneeva,   More common errors in finding exact solutions of nonlinear differential equations: Part I, {\it Commun. Nonlinear Sci. Numer. Simulat.} {\bf 15} (2010), 3887--3899; arXiv:0911.1848.


\bibitem{Senthilkumaran&Pandiaraja&Vaganan2008}
 M. Senthilkumaran,  D. Pandiaraja and  B.M.  Vaganan,
New exact explicit solutions of the generalized KdV equations,
{\it Appl. Math. Comput.} {\bf 202} (2008), 693--699.





\bibitem{Vaneeva2012}
O.O. Vaneeva,  Lie symmetries and exact solutions of variable coefficient mKdV equations: an equivalence based approach, {\it Commun. Nonlinear Sci. Numer. Simulat.} {\bf 17} (2012), 611--618; arXiv:1104.1981.

\bibitem{Vaneeva2013}
O.O. Vaneeva,  Group classiffication of variable coeffficient
KdV-like equations, 451--459, V.~Dobrev (ed.), {\it Springer Proceedings in Mathematics {\rm \&} Statistics, Vol. 36. IX International Workshop ``Lie Theory and Its Application in Physics''}, Springer, 2013; arXiv:1204.4875.


\bibitem{vane2012b}
O.O. Vaneeva, R.O. Popovych and C. Sophocleous,
Extended group analysis of variable coefficient reaction-diffusion
equations with exponential nonlinearities,
{\it J. Math. Anal. Appl.} {\bf 396} (2012), 225--242; arXiv:1111.5198.

\bibitem{VPS2013} O.O. Vaneeva, R.O. Popovych and C. Sophocleous, Equivalence transformations in the study of integrability, arXiv:1308.5126




\bibitem{VSL}
O.O. Vaneeva, C. Sophocleous and P.G.L. Leach,
Lie symmetries of generalized Burgers equations: application to boundary-value problems,
arXiv:1303.3548.


\bibitem{Yang}
Y. Yang, Z.-L. Tao and F.R. Austin, Solutions of the generalized KdV equation with time-dependent
damping and dispersion,
{\it Appl. Math. Comput.} {\bf 216} (2010), 1029--1035.

\end{thebibliography}
\end{document}